\newtheorem{theorem}{Theorem}
\newtheorem{lemma}{Lemma}
\newtheorem{corollary}
{Corollary}
\newtheorem{definition}{Definition}
\newenvironment{proof}[1][Proof:]{\begin{trivlist}
\item[\hskip \labelsep {\bfseries #1}]}{\end{trivlist}}
\newcommand{\qed}{\nobreak \ifvmode \relax \else
      \ifdim\lastskip<1.5em \hskip-\lastskip
      \hskip1.5em plus0em minus0.5em \fi \nobreak
      $\Box$\fi}
\begin{document}


\title{Convergence of Nonparametric Long-Memory Dose-Finding Designs}
\author{Assaf P. Oron, David Azriel and Peter D. Hoff}

%

\maketitle

\begin{abstract}
We examine nonparametric dose-finding designs that use toxicity estimates based on all available data at each dose allocation decision. We prove that one such design family, called here ``interval design'', converges almost surely to the maximum tolerated dose (MTD), if the MTD is the only dose level whose toxicity rate falls within the pre-specified interval around the desired target rate. Another nonparametric family, called ``point design'', has a positive probability of not converging. In a numerical sensitivity study, a diverse sample of dose-toxicity scenarios was randomly generated. On this sample, the ``interval design'' convergence conditions are met far more often than the conditions for one-parameter design convergence (the Shen-O'Quigley conditions), suggesting that the interval-design conditions are less restrictive. Implications of these theoretical and numerical results for small-sample behavior of the designs, and for future research, are discussed.

Keywords: Adaptive Designs, Continual Reassessment Method, Dose Finding, Phase I clinical trials, Cumulative Cohort Design, Up-and-Down

\end{abstract}

\section{Introduction}

Dose-finding designs attempt to identify the dose for which only a given fraction $p$\  of the population experiences some adverse (e.g., toxic) response. This dose is often called the experiment's ``target'', and can be symbolically denoted $F^{-1}(p)$ where $F(x)$ is an adverse-response-rate curve, monotonically increasing with the dose strength $x$.   In practice, it is more common to seek the dose closest to target from among a pre-specified fixed set of dose levels. This is known as the maximum tolerated dose (MTD). Dose-finding designs self-correct the dose allocation, according to hitherto observed outcomes, and thus belong to the family of sequential designs.

Some dose-finding designs, known as ``rule-based'' or ``memoryless'' \citep{OQuigleyZohar06}, are characterized by fixed dose-transition rules based on a limited subset of available observations (usually the most recent ones), and without any assumptions about the dose-toxicity curve $F$. A prominent example is the `3+3' protocol \citep{Carter73}, used for the vast majority of Phase~I cancer trials. We will refer to this class as ``short-memory'' designs.  Another, recently popular approach, is called ``model-based'' or ``designs with memory''. Such designs incorporate a model for $F$, and allocate doses via an estimation procedure based on all available observations. We will call these designs ``long-memory''. The overwhelming majority of novel dose-finding designs appearing in recent literature are long-memory, with Bayesian designs taking center stage \citep{OQuigleyEtAl90,BabbEtAl98}. In Bayesian designs a parametric model curve $G\left(x,\theta,\phi\right)$ substitutes for $F$, with $\theta$ denoting data-estimable parameters and $\phi$ fixed prior parameters. In the most common implementation, the next cohort is chosen according to where $G\left(x,\hat{\theta},\phi\right)$ crosses the horizontal line $y=p$ (Figure \ref{fig:demo}, left).

Designs that do not clearly belong to the short-memory or long-memory types, have also been suggested. These include two-stage \citep{Storer01,Potter02} and hybrid designs \citep[Section 5.3]{IvanovaEtAl03,Oron07}. Yet another intermediate approach suggests using all available data to estimate $F$, i.e., it is long-memory, but avoids parametric or Bayesian model specification \citep{LeungWang01,YuanChappell04,IvanovaEtAl07}. This {\it nonparametric long-memory} design type is the subject of our article.

We focus on convergence of these designs. The term ``convergence'' applied to dose-finding does not usually refer to convergence of our estimate of $F$; the point estimates of $F$ at the doses are guaranteed to converge almost surely to their true value in the limit of infinite sample size, as will be proven below in Section \ref{sec:prelim}. Rather, convergence in the dose-finding context refers to {\it allocation convergence:}  the convergence of the sequence of allocated doses to some stationary pattern. Short-memory designs belonging to the up-and-down family \citep{DixonMood48} generate Markov chains of doses, converging at a geometric rate to a stationary random walk whose dose-allocation distribution is centered close to target. The properties of up-and-down designs can be analyzed using standard Markov chain theory \citep{Derman57,DurhamFlournoy94,Gezmu96,GezmuFlournoy06,OronHoff09}. As to long-memory designs, proofs of allocation convergence are few and far between. In fact, nearly all of the novel long-memory designs -- and dozens of them have been put forth since 1990 -- lack a convergence proof.

To date, we are aware of the following published long-memory convergence proofs:
\begin{itemize}
\item Shen and O'Quigley\citet{ShenOQuigley96} proved that the one-parameter frequentist analogue to the CRM design converges almost surely (at a root-$n$ rate) to the MTD, a notion that will be defined in Section \ref{sec:prelim}. This result is widely perceived as a generic convergence-under-misspecification proof for CRM. However, Cheung and Chappell \citet{CheungChappell02} demonstrate that in fact the proof requires rather restrictive conditions. This will be explored in more detail in Section \ref{sec:numer}.
\item Zacks et al. \citet{ZacksEtAl98} present a similar result, under different and arguably even tighter restrictions on the form of $F$.
\item Ivanova et al. \citet{IvanovaEtAl03} prove that a hybrid 1950's design attributed to Narayana converges to a two-level random walk around the MTD; however, for reasons probably related to undesirable early-stage behavior \citep{Oron05}, this design has not been mentioned since then.
\end{itemize}
None of these proofs applies to the nonparametric long-memory designs we examine here.

\section{Preliminaries}\label{sec:prelim}

\medskip
\subsection{The Designs}

\begin{figure}
\begin{center}
\includegraphics[width=0.95\textwidth,height=0.3\textwidth, trim=10 10 10 10 ]{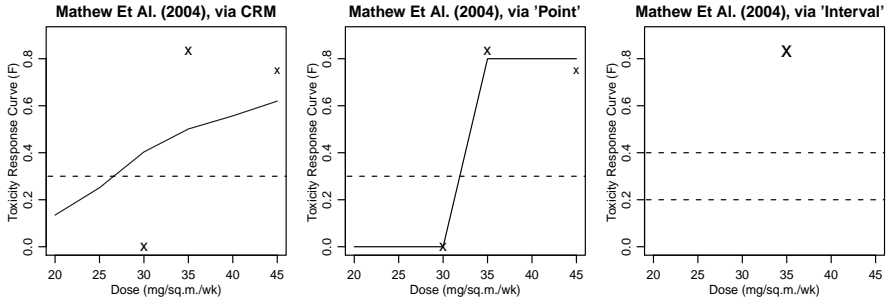}
\caption{Demonstration of the parametric CRM design (left), and the nonparametric point (center) and interval (right) designs, using data from the Mathew et al. 2004 experiment targeting $30\%$ toxicity \citep{MathewEtAl04}. Shown is the situation after Cohort 3, administered at 35~mg/m$^2$/week. Dose spacing was uniform at 5 mg/m$^2$/week. CRM (left) would allocate to Cohort 4 the dose closest to where the posterior model curve (solid line) crosses the dashed horizontal $y=0.3$ line -- i.e., 25~mg/m$^2$/week. The 'X' marks denote the actual observed toxicity rates. The point design (center) would follow the same principle, but using a nonparametric curve obtained via isotonic-regression interpolation; hence, Cohort 4 would receive 30 mg/m$^2$/week. The interval design (right) looks only at the actual toxicity rate at the dose Cohort 3 received ('X' mark). Since it falls above the interval marked by two dashed horizontal lines, the allocation will de-escalate one level to 30~mg/m$^2$/week. }\label{fig:demo}
\end{center}
\end{figure}

We describe the dose-finding problem via a latent-variable model: let $Y(x)\sim \mathrm{Bernoulli}\left(F(x)\right)$ be a binary toxicity response of some dose strength $x$, with the toxicity-rate function $F(x)$ strictly monotone increasing but not directly observable. The overall goal is to estimate the target $Q_p\equiv F^{-1}(p)$, which can be seen as the $100p$-th percentile of $F$ if one thinks of $F$ as a cumulative distribution function of toxicity thresholds. Consider a sequential design treating $k_j\geq 1$ subjects at cohort $j$, $j=1,2,\ldots$, with the value of the allocated-dose r.v. $X_j$ taken out of a set of $m$ predetermined dose levels $\Omega\equiv \left\{d_1,d_2,\ldots d_m : d_1<d_2<\ldots<d_m\right\}$. For simplicity and without loss of generality with respect to our proofs, from here on we assume that all cohorts are of size $1$, and index successive treatments as $X_i,i=1,\ldots n,\ldots$ The toxicity responses $Y_i$ are assumed independent given the $X_i$.

As mentioned in the introduction, rather than precisely estimate $Q_p$ researchers are often content with identifying the MTD, i.e., the dose level closest to $Q_p$ according to some distance criterion; we will denote the MTD as $d_{u^*}\in\Omega$. In this article we assume that distance on the response scale is used to find the MTD; in other words, $u^*\equiv\textrm{argmin}_{1\leq u\leq m}\left|F\left(d_u\right)-p\right|$.

All long-memory designs use the raw toxicity frequencies, which can be seen as Binomial point estimates of $F$ given $n$ observations,
\begin{equation}\label{eq:Fhat}
\hat{F}_n\left(d_u\right)\equiv\frac{\sum_{i=1}^n y_i\textbf{1}\left[X_i=d_u\right]}{\sum_{i=1}^n\textbf{1}\left[X_i=d_u\right]};\ \ u=1,\ldots m,
\end{equation}
where $y_i$ is the binary toxicity outcome ($0$ or $1$). While parametric designs use the $\hat{F}$ values indirecty as inputs to the calculation of $\hat{\theta}$, nonparametric long-memory designs use them directly, with a possible modification to ensure monotonicity of the $\hat{F}$ values using standard methods \citep{RobertsonEtAl88}.

Following are the definitions of the ``point'' and ``interval'' nonparametric designs.

\begin{definition}\label{def:designs}
(i) A ``point-based nonparametric long-memory'' Phase~I design (hereafter, ``point design'') starts at an arbitrary dose. At each subsequent step, the design allocates the next cohort to the level whose (possibly monotonized) $\hat{F}\left(d_u\right)$ value is closest to $p$. If the highest dose for which such an estimate is available maintains $\hat{F}\left(d_u\right)<p$, the experiment escalates to $d_{u+1}$ -- boundary permitting -- and vice versa.

(ii) An ``interval-based nonparametric long-memory'' Phase~I design (hereafter, ``interval design'') starts at an arbitrary dose. At each subsequent step, the design compares a (possibly monotonized) $\hat{F}\left(d_u\right)$, with $d_u$ being the currently-administered dose, to the interval $\left(p-\Delta p_1,p+\Delta p_2\right)$ with $\Delta p_1>0,\ \ \Delta p_2>0$ predetermined constants. If $$\hat{F}\left(d_u\right)\in\left(p-\Delta p_1,p+\Delta p_2\right),$$ $d_u$ will be administered again. If $\hat{F}\left(d_u\right)\leq p-\Delta p_1,$ $d_{u+1}$ will be administered (unless $u=m$ in which case $d_m$ will be administered again), and vice versa if $\hat{F}\left(d_u\right)\geq p+\Delta p_2$.

For both design types, the recommended MTD is the next dose level that would have been allocated at the experiment's end, had another cohort been run.
\end{definition}

The point design was suggested by Leung and Wang \citet{LeungWang01}; it is a direct variation on parametric designs such as CRM, with the parametric curve $F(\hat{\theta})$ replaced by a monotone nonparametric interpolation of $F$ between dose levels (Figure \ref{fig:demo}, center). The interval design's principle is different; one might call it ``narrow long-memory'' since the allocation decision is based on prior outcomes at the current dose only (Figure \ref{fig:demo}, right). Rather than look for some optimal dose at each cohort, the allocation would repeat the existing dose as long as the toxicity frequency at that dose falls within the interval. Different versions of the interval design were put forth by Yuan and Chappell \citet{YuanChappell04} and Ivanova et al \citet{IvanovaEtAl07}. The interval design does not allow for skipping dose levels between consecutive cohorts.

\subsection{Allocation Convergence}

We now clarify the meaning of allocation convergence using the terminology introduced above. The sample space for allocation convergence is the space of all permissible infinite sequences of assigned doses, which is a subset of $\Omega^\infty$ (usually subject to the the constraint of no dose skipping). Each design induces a probability distribution on sequences in this sample space (probabilities of finite subsequences can be exactly calculated, with knowledge of the design's rules and of $F$ values at the doses). Almost sure convergence to the MTD means that sequences ending with infinite and uninterrupted repetitions of $d_{u^*}$ have a combined probability of $1$.

On this sample space, define the random set
\begin{equation}\label{eq:S}
\mathbb{S}\equiv\left\{u:n_u\to\infty\ \textrm{ as }n\to\infty\right\},
\end{equation}

where $n_u$ is the number of subjects assigned to $d_u$. In words, $\mathbb{S}$ is the set of indices for levels appearing an infinite number of times in the sequence. Obviously $\mathbb{S}$ is nonempty for all sequences in the sample space, all being infinite. Moreover, since the interval design does not allow for dose skipping $\mathbb{S}$ must be connected, i.e., composed of consecutive levels. Thus, the value of $\mathbb{S}$ for different sequences in the sample space can be described via an ordered pair of integer random variables $S_1\leq S_2:\ \mathbb{S}=S_1,\ldots S_2$.


We end the preliminaries with the following point-estimate convergence result, which holds regardless of the type of design.

\begin{lemma} \label{lem:SLLN}
${\hat F}_n(d_u) \rightarrow F(d_u)$ as $n\to\infty$, almost surely for all $u\in\mathbb{S}$ .
\end{lemma}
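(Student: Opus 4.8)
The plan is to reduce the statement to a strong law of large numbers for martingales with bounded increments, the point being that because the allocations $X_i$ are chosen adaptively from past data, the frequencies $\hat F_n(d_u)$ in \eqref{eq:Fhat} are not sample means of i.i.d.\ random variables, so Kolmogorov's SLLN does not apply directly. Fix a level $d_u$ and let $\mathcal{G}_i$ be the $\sigma$-field generated by the allocations and responses through cohort $i$ together with the next allocation $X_{i+1}$; under this filtration $X_{i+1}$ is measurable and, by the model, $Y_{i+1}\mid\mathcal{G}_i\sim\mathrm{Bernoulli}\bigl(F(X_{i+1})\bigr)$. First I would introduce the centered partial sums
\[
M^{(u)}_n \;\equiv\; \sum_{i=1}^{n}\bigl(Y_i-F(d_u)\bigr)\mathbf{1}[X_i=d_u],
\]
and verify that $\{M^{(u)}_n\}$ is a $\{\mathcal{G}_n\}$-martingale: the increment added at step $n+1$ is $\bigl(Y_{n+1}-F(d_u)\bigr)\mathbf{1}[X_{n+1}=d_u]$, whose $\mathcal{G}_n$-conditional mean is $\bigl(F(X_{n+1})-F(d_u)\bigr)\mathbf{1}[X_{n+1}=d_u]=0$, since $F(X_{n+1})=F(d_u)$ on $\{X_{n+1}=d_u\}$. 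This martingale has increments bounded by $1$ in absolute value, and its predictable quadratic variation is $\langle M^{(u)}\rangle_n=F(d_u)\bigl(1-F(d_u)\bigr)\,n_u$.

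Next I would apply a classical strong law for square-integrable martingales --- namely, that $M^{(u)}_n/\langle M^{(u)}\rangle_n\to 0$ almost surely on the event $\{\langle M^{(u)}\rangle_\infty=\infty\}$ (see, e.g., Hall and Heyde, \emph{Martingale Limit Theory and Its Application}, 1980). When $0<F(d_u)<1$, the event $\{u\in\mathbb{S}\}$ coincides, by \eqref{eq:S}, with $\{n_u\to\infty\}$, which is contained in $\{\langle M^{(u)}\rangle_\infty=\infty\}$; hence on $\{u\in\mathbb{S}\}$ we get $M^{(u)}_n/n_u\to 0$ almost surely. Since $\hat F_n(d_u)-F(d_u)=M^{(u)}_n/n_u$ whenever $n_u\geq 1$, this is exactly $\hat F_n(d_u)\to F(d_u)$ a.s.\ on $\{u\in\mathbb{S}\}$. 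The boundary cases $F(d_u)\in\{0,1\}$ are trivial, because then $Y_i$ is almost surely constant on $\{X_i=d_u\}$, so $\hat F_n(d_u)$ equals $F(d_u)$ identically once $n_u\geq 1$. Finally, since $\Omega$ has only $m$ levels, discarding the union of the $m$ exceptional null sets shows that almost surely the convergence holds simultaneously for every $u\in\mathbb{S}$.

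An equivalent and perhaps more transparent route is a time change: let $\tau_1<\tau_2<\cdots$ be the successive (random) times at which $d_u$ is allocated, observe that on $\{\tau_k<\infty\}$ the sequence $\bigl(Y_{\tau_k}-F(d_u)\bigr)_k$ is a bounded martingale-difference sequence, apply the martingale SLLN to conclude $K^{-1}\sum_{k=1}^{K}Y_{\tau_k}\to F(d_u)$, and note that $\hat F_n(d_u)$ is precisely the running mean of this sequence, hence has the same limit on $\{u\in\mathbb{S}\}$. Either way, the only genuine obstacle is the adaptivity of the allocation rule --- which is exactly what forces the martingale machinery in place of a plain SLLN --- and once that structure is set up the remaining steps (bounded increments, the inclusion $\{u\in\mathbb{S}\}\subseteq\{\langle M^{(u)}\rangle_\infty=\infty\}$, and the finite union over $u$) are routine.
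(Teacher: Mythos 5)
Your proposal is correct and follows essentially the same route as the paper: center the counts to form the martingale $\sum_{i\le n}\mathbf{1}[X_i=d_u]\bigl(Y_i-F(d_u)\bigr)$, note its quadratic variation is proportional to $n_u$, and invoke the martingale strong law on the event $\{n_u\to\infty\}=\{u\in\mathbb{S}\}$. Your explicit treatment of the degenerate cases $F(d_u)\in\{0,1\}$ (where the quadratic variation does not diverge) and of the filtration making $X_{i+1}$ predictable are minor refinements the paper leaves implicit, but the argument is the same.
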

\begin{proof} First, by definition of $\mathbb{S}$ we know that for all $u\in\mathbb{S}$, $n_u\to\infty$ as $n\to\infty$. Second, note that the point estimates can be written as

\begin{equation}\label{eq:fhatf}
{\hat F}_n(d_u) =  F(d_u) + \frac{1}{n_u} \sum_{i=1}^n I(X_i=d_u)(Y_i - F(d_u)).
\end{equation}

Now,   $M_n \equiv \sum_{i=1}^n I(X_i=d_u)(Y_i -F(d_u))$ is a square integrable martingale with respect to the filtration ${\cal F}_n \equiv \sigma(X_1,Y_1,\ldots,X_{n},Y_{n})$. Its quadratic variation is:
\[
\sum_{i=1}^n [I(X_i=d_u)]^2 \cdot F(d_u)\cdot [1-F(d_u)] \propto \sum_{i=1}^n I(X_i=d_u)=n_u.
\]
Therefore, due to the strong law of martingales (ref., \citep{Shirayev96}, p. 519, theorem 4):
\begin{equation*}\label{eq:shirayev}
\frac{1}{n_u} \sum_{i=1}^n I(X_i=d_u)(Y_i -F(d_u)) \rightarrow 0~~a.s.~~\forall u\in\mathbb{S}.
\end{equation*}
Revisiting (\ref{eq:fhatf}), the lemma's statement immediately follows.
\qed\end{proof}


\section{Interval-Design Convergence}\label{sec:interval}

Note that with respect to allocation convergence, the space of possible configurations of $\mathbb{S}$ can be partitioned into three major subspaces or events $A,B,C$:

\begin{itemize}
\item $A:\ s_1=s_2=u^*$,
\item $B:\ s_1<s_2$ and $u^*\in \mathbb{S}$,
\item $C:\ u^*\not\in \mathbb{S}$.
\end{itemize}
Almost sure convergence is equivalent to stating that $\Pr(A)=1$.


\begin{theorem}\label{theorem:interval}
(i) Dose allocations in interval designs converge almost surely to $d_{u^*}$, if the latter maintains
\begin{equation}\label{eq:interval}
F\left(d_{u^*}\right)\in\left(p-\Delta p_1,p+\Delta p_2\right),
\end{equation}
and if $d_{u^*}$ is also the \underline{only} level satisfying
\begin{equation}\label{eq:interval2}
F\left(d_{u^*}\right)\in\left[p-\Delta p_1,p+\Delta p_2\right].
\end{equation}

(ii) Almost-sure convergence to $d_{u^*}$ will also occur if $F\left(d_1\right)\geq p+\Delta p_2$ (meaning that $u^*=1$) or $F\left(d_m\right)\leq p-\Delta p_1$ (meaning that $u^*=m$).
\end{theorem}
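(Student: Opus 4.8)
The plan is to work on the almost-sure event $\Omega_0$ supplied by Lemma~\ref{lem:SLLN}, on which $\hat F_n(d_u)\to F(d_u)$ for every index $u$ in the recurrent set $\mathbb{S}=\{s_1,\dots,s_2\}$, and to show that on $\Omega_0$ one necessarily has $\mathbb{S}=\{u^*\}$ --- that is, $\Omega_0\subseteq A$, so $\Pr(A)=1$, which is the claimed convergence. Since $\mathbb{S}$ is nonempty and connected, it is enough to rule out, pointwise on $\Omega_0$, the events $C$ (the case $u^*\notin\mathbb{S}$) and $B$ (the case $u^*\in\mathbb{S}$ but $\mathbb{S}\neq\{u^*\}$).

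\emph{Ruling out $C$.} If $u^*\notin\mathbb{S}$, connectedness forces either $s_2<u^*$ or $s_1>u^*$; consider the former (the latter is symmetric). Strict monotonicity together with (\ref{eq:interval}) gives $F(d_{s_2})<F(d_{u^*})<p+\Delta p_2$, and since $s_2\neq u^*$ the uniqueness hypothesis (\ref{eq:interval2}) forces $F(d_{s_2})\notin[p-\Delta p_1,p+\Delta p_2]$, hence $F(d_{s_2})<p-\Delta p_1$ strictly. On $\Omega_0$ this makes $\hat F_n(d_{s_2})<p-\Delta p_1$ for all $n$ past some $N$; because $d_{s_2}$ is occupied at infinitely many steps $n\geq N$, the interval rule escalates, setting the next allocation to $d_{s_2+1}$ at each such step (permissible since $s_2<u^*\leq m$, so we are not at the upper boundary). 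Hence $d_{s_2+1}$ is visited infinitely often, contradicting $s_2=\max\mathbb{S}$. Therefore $u^*\in\mathbb{S}$.

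\emph{Ruling out $B$, and concluding (i).} Now $u^*\in\mathbb{S}$, so on $\Omega_0$, $\hat F_n(d_{u^*})\to F(d_{u^*})$, which by (\ref{eq:interval}) lies in the \emph{open} interval $(p-\Delta p_1,p+\Delta p_2)$; thus $\hat F_n(d_{u^*})\in(p-\Delta p_1,p+\Delta p_2)$ for all $n$ past some $N'$. The design occupies $d_{u^*}$ at some step $n_0\geq N'$, and at that step --- and then, by induction, at every step $\geq n_0$ --- the interval rule re-administers $d_{u^*}$. So $\mathbb{S}=\{u^*\}$, which excludes $B$ and establishes part (i).

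\emph{Part (ii), and the main obstacle.} If $F(d_1)\geq p+\Delta p_2$ then $u^*=1$, and the same mechanism rules out $s_1>1$: for such $s_1$ one has $F(d_{s_1})>F(d_1)\geq p+\Delta p_2$ strictly --- note this uses only monotonicity, not uniqueness --- so eventually $\hat F_n(d_{s_1})\geq p+\Delta p_2$, forcing repeated de-escalation from $d_{s_1}$ and contradicting $s_1=\min\mathbb{S}$. Hence $1\in\mathbb{S}$, so $\hat F_n(d_1)\to F(d_1)\geq p+\Delta p_2>p-\Delta p_1$ is eventually $>p-\Delta p_1$; from then on the design never escalates out of $d_1$, while any de-escalation signal there is absorbed by the lower boundary, so once $d_1$ is reached it is kept forever and $\mathbb{S}=\{1\}=\{u^*\}$. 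The case $F(d_m)\leq p-\Delta p_1$ is symmetric. I do not expect a deep technical obstacle; the care goes into the quantifiers --- each ``$\hat F_n$ is eventually on one side of a threshold'' statement (valid on $\Omega_0$) must be matched with an ``infinitely many visits'' statement to yield ``a level adjacent to an endpoint of $\mathbb{S}$ is visited infinitely often'', and one must verify that the relevant escalation or de-escalation is not blocked by the boundary $\{1,m\}$. It is also worth making explicit which hypothesis does what: the open-interval condition (\ref{eq:interval}) is exactly what makes $d_{u^*}$ absorbing in the $B$-argument, whereas the uniqueness condition (\ref{eq:interval2}) is exactly what places every $d_u$ with $u\neq u^*$ strictly outside the interval, hence transient, in the $C$-argument; part (ii) needs neither because the boundary supplies the absorption for free.
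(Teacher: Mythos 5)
Your proof is correct and follows essentially the same route as the paper's: use Lemma~\ref{lem:SLLN} to show that any endpoint of $\mathbb{S}$ lying strictly outside the interval would eventually push its neighbor into $\mathbb{S}$ (ruling out $C$), and that condition (\ref{eq:interval}) makes $d_{u^*}$ eventually absorbing (ruling out $B$), with the boundary playing the absorbing role in part (ii). The only point you leave implicit --- which the paper itself relegates to a footnote --- is that possible monotonizing corrections to the $\hat F_n$ values do not disturb the argument asymptotically.
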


\begin{proof} (i) We begin by showing that $\Pr(C)=0$, which is equivalent to $\Pr\left(u^*\in \mathbb{S}\right)=1$. We do it by contradiction, assuming w.l.o.g. that there is some specific level $s_1>u^*$ for which $\Pr\left(S_1=s_1>u^*\right)>0$ (in other words, that there are sequences with a positive probability of occurring, in which beyond a certain point only levels above the MTD are visited). From the theorem's assumptions, we know that $F\left(d_{s_1}\right)>p+\Delta p_2$. Due to Lemma \ref{lem:SLLN}, this means that for $n$ large enough and all sequences described by the conditioned event,
\begin{equation}\label{eq:false}
\Pr\left\{\hat{F}_n\left(d_{s_1}\right)>p+\Delta p_2\mid S_1=s_1>u^*\right\}=1.
\end{equation}
Given the interval design's transition rules, this means that eventually the next-lower level, $d_{s_1-1}$, will be allocated following each visit to $d_{s_1}$ with probability $1$ conditioned on the above event.\footnote{Any monotonizing modifications to the $\hat{F}$'s do not matter as $n\to\infty$, since in that limit they are needed with probability zero. Or, if they involve a level not belonging to $\mathbb{S}$, their impact tends to zero.} It follows that $\left(s_1-1\right)\in\mathbb{S}$, reaching a contradiction. We conclude that there is no $s_1>u^*$ for which $\Pr\left(S_1=s_1>u^*\right)>0$, and therefore one cannot condition on such an event as was done in (\ref{eq:false}) -- and similarly, no $s_2<u^*$ for which $\Pr\left(S_2=s_2<u^*\right)>0$. In terms of the partition of sequence space defined above, $\Pr\left(u^*\in \mathbb{S}\right)=\Pr\left(A\cup B\right)=1$.

Now we can assume that $u^*\in \mathbb{S}$. Given the theorem's conditions and according to Lemma \ref{lem:SLLN}, eventually for $n$ large enough $$\Pr\left(\hat{F}_n\left(d_{u^*}\right)\in\left(p-\Delta p_1,p+\Delta p_2\right)\mid u^*\in S\right)=1,$$
and so upon the next visit to $d_{u^*}$ it will be repeatedly allocated with probability $1$. This means, that with probability one (conditional upon $u^*\in \mathbb{S}$) there can be no other level in $\mathbb{S}$. Therefore $\Pr(A)=1$, and the interval design converges almost surely.

(ii) In the same vein, if all true $F$ values of design levels are above or below the target interval, then with probability $1$ the boundary level with $F$ value closest to the interval ($d_1$ or $d_m$) belongs to $\mathbb{S}$, and eventually the design will repeatedly hit upon the boundary condition mandating repetition of that level with probability $1$ as well.\qed
\end{proof}


The theorem's proof itself suggests what might happen in case its conditions are violated. Hence, the following two results are immediate.

\begin{corollary}\label{col:1} (i) If no dose level satisfies (\ref{eq:interval}) but $p\in\left[F\left(d_1\right),F\left(d_m\right)\right]$, an interval design would eventually oscillate with probability $1$ between the two doses whose $F$ values straddle the target interval.

(ii) If there is more than one level satisfying (\ref{eq:interval}), with probability $1$ an interval design will converge to either of these levels. However, convergence to $d_u^*$ itself is not guaranteed.

\end{corollary}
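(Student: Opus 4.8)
The plan is to re-use the three limiting mechanisms behind the proof of Theorem~\ref{theorem:interval}, now applied a few more times. By Lemma~\ref{lem:SLLN} and the no-skipping structure (so $\mathbb{S}=\{S_1,\ldots,S_2\}$ is connected): a level $d_u\in\mathbb{S}$ with $F(d_u)>p+\Delta p_2$ forces $d_{u-1}\in\mathbb{S}$, since eventually every visit to $d_u$ is followed (with conditional probability one) by a de-escalation; symmetrically $F(d_u)<p-\Delta p_1$ forces $d_{u+1}\in\mathbb{S}$; and $F(d_u)\in(p-\Delta p_1,p+\Delta p_2)$ forces $\mathbb{S}=\{u\}$, since from some point on $d_u$ is repeated forever. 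In each part I would apply these to $S_1$, $S_2$, and where needed their inner neighbours.

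\emph{Part (i).} From $p\in[F(d_1),F(d_m)]$ and the hypothesis that no $F(d_u)$ lies in the open interval (\ref{eq:interval}), monotonicity of $F$ forces $F(d_1)\le p-\Delta p_1$ and $F(d_m)\ge p+\Delta p_2$, so there is a unique ``straddle'' index $j=\max\{u:F(d_u)\le p-\Delta p_1\}\in\{1,\ldots,m-1\}$, with $F(d_u)<p-\Delta p_1$ for $u<j$ and $F(d_u)\ge p+\Delta p_2$ for $u\ge j+1$ (strictly so for $u>j+1$). First I would show $\{j,j+1\}\subseteq\mathbb{S}$: by the mechanisms above $F(d_{S_1})$ cannot exceed $p+\Delta p_2$ (that would put $S_1-1$ in $\mathbb{S}$), so $S_1\le j$; symmetrically $S_2\ge j+1$; connectedness of $\mathbb{S}$ finishes the inclusion. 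For the reverse inclusion, if $S_1<j$ then $d_{S_1}$ is eventually an ``always-escalate'' level, hence revisited infinitely often only if $d_{S_1+1}$ de-escalates back to it infinitely often, i.e.\ only if $\hat F_n(d_{S_1+1})\ge p+\Delta p_2$ infinitely often; but $F(d_{S_1+1})\le p-\Delta p_1<p+\Delta p_2$, which Lemma~\ref{lem:SLLN} forbids. So $S_1=j$, and symmetrically $S_2=j+1$; thus $\mathbb{S}=\{j,j+1\}$ almost surely, and since eventually $\hat F_n(d_j)<p-\Delta p_1$ and $\hat F_n(d_{j+1})>p+\Delta p_2$, the allocations ultimately alternate strictly between the two straddling doses $d_j$ and $d_{j+1}$.

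\emph{Part (ii).} By monotonicity the set of levels satisfying (\ref{eq:interval}) is a consecutive block $\{a,\ldots,b\}$ with $b>a$, flanked (when those neighbours exist) by $F(d_{a-1})\le p-\Delta p_1$ and $F(d_{b+1})\ge p+\Delta p_2$. The same case analysis shows $\mathbb{S}$ cannot contain a level strictly inside $(p-\Delta p_1,p+\Delta p_2)$ without equalling that single level; it cannot consist of two or more levels lying entirely on one side of the block, by the escalate/de-escalate mechanism applied to $S_1$ or $S_2$; and since $b>a$ the block is at least two levels wide, so a multi-level $\mathbb{S}$ cannot straddle it either. Hence $\mathbb{S}=\{u\}$ with necessarily $u\in\{a,\ldots,b\}$, and the design converges almost surely to one of $d_a,\ldots,d_b$. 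To see this need not be $d_{u^*}$: if $u^*\notin\{a,\ldots,b\}$ the claim is automatic, and otherwise I would fix some $v\in\{a,\ldots,b\}$ with $v\ne u^*$ (possible since $b>a$) and exhibit a positive-probability trajectory along which, after finitely many $\pm1$ moves, the design sits at $d_v$ with $\hat F(d_v)$ safely inside $(p-\Delta p_1,p+\Delta p_2)$ and --- by a concentration estimate, with probability bounded below --- never leaves that interval thereafter, so $d_v$ is repeated from then on; hence $\Pr(\text{allocations converge to }d_v)>0$.

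Most of this is a rerun of Theorem~\ref{theorem:interval}'s argument; the genuinely new ingredient is the quantitative lower bound in part~(ii), which needs a concentration/tail estimate for $\hat F_n(d_v)$ rather than mere almost-sure convergence, so I expect that to be the main point to get right (in particular, one cannot pin the design at $d_v$ from its very first visit there, since a single observation gives $\hat F(d_v)\in\{0,1\}$, generally outside the interval). A secondary delicate case is that of endpoint ties, $F(d_u)=p-\Delta p_1$ or $F(d_u)=p+\Delta p_2$, which the open-interval hypothesis (\ref{eq:interval}) does not rule out: there $\hat F_n(d_u)$ fluctuates around the endpoint and the transition rule alternates between ``repeat'' and ``escalate/de-escalate''. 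Handling this cleanly needs the extra fact that $\hat F_n(d_u)$ crosses its limit infinitely often almost surely (a recurrence / law-of-iterated-logarithm statement about the underlying mean-zero random walk), after which a tied level still forces its neighbour into $\mathbb{S}$ or is still repeated forever, so the identification of $\mathbb{S}$ persists --- though ``oscillate'' in part~(i) should then be read as ``each of the two doses is visited infinitely often'' rather than as strict alternation. Alternatively one may simply add ``no $F(d_u)$ equals $p-\Delta p_1$ or $p+\Delta p_2$'' to the hypotheses, matching the closed-interval condition (\ref{eq:interval2}) used in Theorem~\ref{theorem:interval}.
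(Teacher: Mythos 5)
Your proof is correct and follows exactly the route the paper intends: the paper gives no separate argument for this corollary, stating only that both parts are ``immediate'' from the mechanisms in the proof of Theorem~\ref{theorem:interval} (a level of $\mathbb{S}$ with $F$ above the interval eventually forces de-escalation, below forces escalation, inside forces repetition), which is precisely the argument you re-run on $S_1$, $S_2$ and their neighbours. The two points where you go beyond the paper --- the positive-probability ``trap at $d_v$'' construction needed to make the non-guarantee in part~(ii) a genuine statement about trajectories, and the boundary-tie cases $F(d_u)\in\{p-\Delta p_1,\,p+\Delta p_2\}$ that the open-interval hypothesis (\ref{eq:interval}) does not exclude --- are real gaps in the paper's one-line justification, and you handle both correctly.
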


\section{Point-Design Convergence}\label{sec:point}
The point design has a positive probability of not converging. We show this via a simple, yet generic counterexample: Assume that $p<1/2$, $F\left(d_{u^*}\right)=p$ and $F\left(d_{u^*-1}\right)<p$ (all other levels matter little). The experiment uses cohorts of size $k\geq 1$ and proceeds, as dose-finding trials often do, from below. Sooner or later $d_{u^*-1}$ is reached, and with high probability within a few cohorts we will have $\hat{F}\left(d_{u^*-1}\right)<p$, mandating escalation to $d_{u^*}$. Now, suppose that the very first cohort at $d_{u^*}$ is all toxicities; clearly the probability for that occurring is positive. Then $\hat{F}\left(d_{u^*}\right)=1$. Since $p<1/2$, regardless of the value of $\hat{F}\left(d_{u^*-1}\right)$, it is now closer to target than $\hat{F}\left(d_{u^*}\right)$ and it will be assigned. Moreover, since $\hat{F}\left(d_{u^*}\right)=1$ monotonicity at $d_{u^*}$ will not be violated, meaning that no monotonizing corrections can modify this point estimate, which will remain too far from $p$ for the remainder of the experiment. Hence $d_{u^*}$ will never be assigned again. A similar argument was made by Cheung in a Biometrics letter to the editor \citep{Cheung02}.

\section{Numerical Sensitivity Study}\label{sec:numer}
\subsection{Convergence of One-Parameter Designs}

How restrictive are the conditions outlined in Theorem \ref{theorem:interval}? One way to gauge this is to compare them with the conditions of Shen and O'Quigley's proof for CRM-like one-parameter frequentists designs \citep{ShenOQuigley96}. We now revisit its conditions in some detail. Beside straightforward conditions guaranteeing that the modeled dose-toxicity curve $G\left(x,\theta\right)$ can match the true curve $F$ at least at one $x$ value by changing the value of $\theta$, the proof focuses on how well $G$ fits  $F$ elsewhere. Being a one-parameter model, $G$ cannot be guaranteed to do so simultaneously at more than one point. Moreover, the choice of this point uniquely determines $\theta$. Suppose w.l.o.g. that $G\left(d_u\right)=F\left(d_u\right)$, and call the resulting parameter value $\theta_u$. Then the level which, according to $G\left(\theta_u\right)$ appears to be the MTD, will be called the level \emph{``nominated''} by $d_u$ (since it will be allocated whenever $G$ matches $F$ at $d_u$). The crucial and most restrictive Shen-O'Quigley condition is {\it that all levels nominate the true MTD.} Hereafter we will refer to this convergence result as ``CRM convergence'', even though it is in fact a proof for the convergence of an analogous frequentist design.

\smallskip

Cheung and Chappell, in their interpretation of the proof, opine that this requires a very close match between $G$ and $F$ along the entire dose range \citep{CheungChappell02}. They go on to suggest that perhaps the Shen-O'Quigley conditions were too restrictive, and it might be enough for the MTD to nominate itself, and additionally doses below the MTD nominate higher doses than themselves, and vice versa. Thus, dose allocation might eventually be ``funnelled'' towards the MTD. The conjecture has not been proven. Conversely, it is clear when a one-parameter Bayesian design {\it cannot} be guaranteed to converge:

\begin{enumerate}
\item When the MTD fails to nominate itself; or
\item When other levels beside the MTD nominate themselves; or
\item When the ``funneling'' conditions suggested by Cheung and Chappell are not met.
\end{enumerate}

\subsection{Comparing Interval-Design and CRM Convergence}

We explored numerically the relative restricted-ness of the two sets of conditions. Since the convergence of both the interval design and CRM can be directly determined from the values of $F$ at the dose levels, together with the interval endpoints and the parameters of $G$, there is no need to simulate actual experiments. Rather, we simulated various scenarios of $F$ on $m=5$ and $m=10$ dose sets, and examined whether the CRM and the interval design convergence conditions are met for each scenario. We chose the target $p=0.3$, the value most commonly used in Phase I cancer trials, which is the application for which both designs have been developed. For this target, developers of the CCD interval design recommend the interval $(0.2,0.4)$ when $m=6$; they provide no recommendation for other values of $m$. We also explored the narrower interval $(0.25,0.35)$; hereafter we refer to the interval design in this simulation as ``CCD''. For CRM, we used the recently popular ``power'' model, in which $G\left(d_1,\ldots,d_m;\theta\right)=\left(p_1,\ldots,p_m\right)^{\textrm{exp}(\theta)}$, with the $p$'s being prior toxicity rates assigned to each dose. Experienced CRM designers do not choose these rates solely according to toxicity information knowledge, but mostly in order to ensure sensible small-sample behavior. A choice commonly encountered in the field resembles a geometrically-increasing sequence, e.g. $\left(p_1,\ldots,p_m\right)=\left\{0.05,0.1,0.2,0.4,0.8\right\}$ for $m=5$ \citep{PistersEtAl04}.

In order to generate reasonably realistic scenarios without restricting ourselves to a given distribution family, and also in order to minimize the direct impact of arbitrary conscious choice upon $F$, we simulated increments of $F$ in each scenario as a random Dirichlet vector. Dirichlet distribution parameters control the likelihood of generating various curves; these parameters themselves were randomly drawn out of a finite pool, producing a range of diverse, yet reasonably realistic $F$ curves, which would be relevant for the dose-finding problem as defined here. Additionally, lower and upper bounds were placed on increments of $F$ to exclude scenarios in which adjacent-dose toxicity rates are virtually indistinguishable, or spaced too far apart. Figure \ref{fig:simul} shows a random sample of $20$ scenarios (out of $2500$ used in the study) for each of $m=5$ and $m=10$. Scenarios were simulated and convergence evaluated using the R language version 2.9.1 \citep{RLang}. Additional details appear in the supplementary material.

\begin{figure}
\begin{center}
\includegraphics[width=0.9\textwidth,height=0.45\textwidth, trim=10 10 10 10 ]{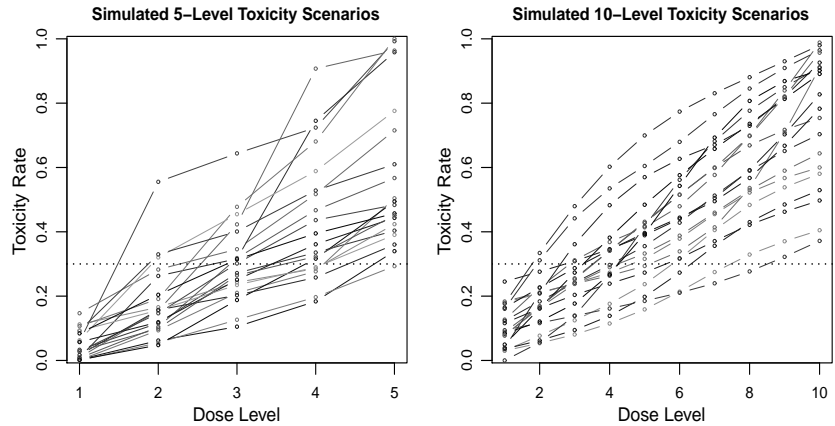}
\caption{Simluated dose-toxicity curves. Random samples of 20 out of the 2500 simulated scenarios, for $m=5$ (left) and $m=10$ (right).}\label{fig:simul}
\end{center}
\end{figure}

For CCD, we distinguish between the three possible convergence outcomes proven in Theorem 1 and Corollary \ref{col:1}:
\begin{itemize}
\item Convergence to the MTD guaranteed. The MTD is the only level in the interval, or the target is below/above the design dose range (column marked ``Yes'').
\item More than one level in  the interval, and hence only convergence to within the interval is assured, but not to the MTD itself (column marked ``No: 2+'').
\item No level in the interval, and hence an asymptotic oscillating behavior is expected (column marked ``No: 0'').
\end{itemize}

For CRM, we distinguish between five possible outcomes:
\begin{itemize}
\item Convergence to the MTD guaranteed: all levels nominate the MTD (row marked ``Yes'').
\item ``Soft convergence'': convergence not guaranteed by proof, but the Cheung-Chappell ``funneling'' conditions are met (row marked ``Funneling'').
\item Convergence not guaranteed: one of the three failure modes outlined earlier (rows marked ``No: 0'', ``No: 2+'' and ``No Funneling'').
\end{itemize}

Below are tables of simulation results for five and ten design levels. The full Shen-O'Quigley conditions for CRM convergence are only rarely met. The weaker ``funneling'' conditions are met in quarter of the $m=5$ cases, but nearly half of the $m=10$ cases. A notable observation is that only two of the three CRM failure modes take place, at least in these simulations; hence, only four outcomes are tabulated instead of five. The missing entry is ``No Funneling'': if funneling is violated, then always (in our simulation runs) one of the other two conditions is violated as well.

\begin{table}
\caption{Comparative theoretical convergence summary of CRM and CCD designs, for a diverse ensemble of numerically-generated scenarios. The MTD was defined as the dose level whose true $F$ value is closest to $0.3$. All numbers in the table are in percents. Row and column labels are explained in the text.}

\smallskip

\begin{tabular}{cp{1.5cm}crrr|rrr}

  \toprule
\large{\bf $m=5$} & & &\multicolumn{3}{c}{{\bf CCD (width: $\pm 0.1$)}}&\multicolumn{3}{|c}{{\bf CCD (width: $\pm 0.05$)} } \\
\midrule
& & & No: 0 & No: 2+ & Yes & No: 0 & No: 2+ & Yes\\
  \cmidrule{4-9}

 \multicolumn{2}{l}{\bf CCD Margins} & & {\bf  7.6} & {\bf 56.2 } & {\bf 36.2} & {\bf 32.6 } & {\bf  13.4  } & {\bf 53.9 } \\
\midrule
\multirow{5}{*} {{\bf CRM}} & & {\bf CRM Margins} &&&&& \\
\cmidrule{3-3}
& No: 0 & {\bf 17.0} & 7.0 & 2.1 & 7.8 & 16.6 & 0.0 & 0.4 \\
& No: 2+ & {\bf 56.3}     & 0.3 & 46.4 & 9.6 & 9.4  & 13.4 & 33.6 \\
& ``Funneling'' & {\bf  25.0} &  0.1 & 7.5 & 17.4 & 6.0 & 0.1 &18.8 \\
& Yes & {\bf 1.7 } & 0.2 & 0.1 & 1.4 & 0.6 & 0.0 & 1.1 \\
\bottomrule
\large{\bf $m=10$} & & &\multicolumn{3}{c}{{\bf CCD (width: $\pm 0.1$)}}&\multicolumn{3}{|c}{{\bf CCD (width: $\pm 0.05$)} } \\
\midrule
& & & No: 0 & No: 2+ & Yes & No: 0 & No: 2+ & Yes\\
  \cmidrule{4-9}

 \multicolumn{2}{l}{\bf CCD Margins} & & {\bf  0.8} & {\bf 92.2 } & {\bf 7.0} & {\bf 4.6 } & {\bf  53.6  } & {\bf 41.8 } \\
\midrule
\multirow{5}{*} {{\bf CRM}} & & {\bf CRM Margins} &&&&& \\
\cmidrule{3-3}
& No: 0 & {\bf 13.4} & 0.0 & 11.5 & 1.9 &  3.1 & 2.9 & 7.3 \\
& No: 2+ & {\bf 41.4}     & 0.0 & 41.3 & 0.2 & 0.1 & 36.6 & 4.8 \\
& ``Funneling'' & {\bf  44.4} &   0.4 & 39.4 & 4.7 & 0.9 & 14.1 & 29.4 \\
& Yes & {\bf 0.8 } & 0.4 & 0.1 & 0.3 & 0.5 & 0.0 & 0.3 \\
\bottomrule\end{tabular}
\end{table}

Observing the CCD results, exact convergence to the MTD is guaranteed in a far larger number of cases than with CRM. Together with the multiple-level (``No: 2+'') cases, in the vast majority of simulated scenarios CCD is guaranteed to converge to within the specified interval. Comparing the narrower and wider interval design options, we see that the former performs better with more design levels. This suggests that the density of dose levels should also play a role in determining the interval width, a point somewhat side-stepped by CCD's developers.

Comparing the two design approaches by case, there is a strong association between the CRM's failure to converge due to multiple self-nominating levels, and CCD's failure to converge due to multiple levels inside the interval: practically all scenarios indicating the former, also indicate the latter (but not vice versa). The ``funneling'' scenarios are associated with CCD scenarios that converge to within the interval.

\section{Discussion}\label{sec:conc}

\subsection{Convergence and the Role of Simulation}

As mentioned in the introduction, the explosion in novel long-memory design development lacks an accompanying effort to prove design convergence. However, as pointed out in the introduction, the common thread between all dose-finding designs is a self-correction mechanism to concentrate treatments around target. If this mechanism is sound, then it should eventually converge to some stationary behavior with desirable properties vis-a-vis the MTD. If convergence cannot be guaranteed under realistic conditions, then the self-correction mechanism itself is suspect regardless of sample size. In other words, convergence should be viewed as \emph{a necessary condition for dose-finding designs} (albeit not a sufficient one, since small-sample behavior does need to be examined separately). Therefore, the study of convergence should play an larger role in the field of novel dose-finding designs.

As to the use of simulation itself, we attempted to try and minimize the effect of direct human choice on the tested scenarios. Ivanova and coworkers \citep{IvanovaEtAl07} started along this direction, choosing $F$ values out of an ordered uniform distribution. We believe our approach further expands the horizons for a distribution-free simulation study, and does succeed in sampling a sizable region of the space of distributions that would be considered realistic by researchers in the field. It might serve as an initial template for future benchmark comparative performance simulations between designs, of the type that is common in fields such as machine learning.

\subsection{Implications for Interval Designs and CRM}

The results summarized in Table 1 underscore Cheung and Chappell's observation that the existing Shen-O'Quigley CRM convergence proof requires rather restrictive conditions. CRM convergence occurs only in a small fraction of simulated cases, and we venture to suggest the conditions for it are only rarely met in practice. Thus, an accurate interpretation of the Shen-O'Quigley result seems to be that CRM's convergence under \emph{correct} specification -- which for one-parameter models is an immediate result of standard MLE convergence theorems -- can be extended to very mildly misspecified models.\footnote{One way to quantify the degree of mis-specification is by measuring the total variation distance between $F$ and $G$, with $\theta$ chosen such that $F=G$ at the MTD. With $m=5$ the $G$ curves satisfying the Shen-O'Quigley conditions are about half as far, on the average, from $F$ as the other curves. With $m=10$ the difference is approximately threefold on the average.} The Cheung-Chappell ``funnelling'' conditions seem far more realistic than the Shen-O'Quigley conditions; perhaps this observation will serve as motivation for finding a proof that they indeed guarantee convergence. At this moment it is unclear whether such a proof is feasible, or what additional conditions it would require.

The CCD interval design converged to the MTD fairly often. The prospects of convergence improve with a well-informed choice of interval width. Our theoretical and numerical results suggest a far simpler approach to optimal interval-width choice than the intensive multi-scenario small-sample numerical study by the method's originators \citep{IvanovaEtAl07}. Based on Theorem \ref{theorem:interval}, study designers should aim to capture exactly one dose level in the interval, and erring towards more than one level is probably more desirable than capturing none. In the absence of prior scientific knowledge about the slope of $F$ around target, a total interval width of $1/m$ would do, as long as researchers believe that all $m$ levels have toxicity rates not too close to either $0$ or $1$. In any case, even when CCD does not converge to the MTD -- whether due to multiple levels in the interval, or none -- one can still guarantee a predictable asymptotic behavior with respect to the pre-specified interval. This is not the case with parametric designs in general and with one-parameter CRM in particular.

\subsection{Convergence, Small-Sample Behavior and Simulation}

Convergence studies can also shed light upon designs' small-sample behavior. For example, the up-and-down designs mentioned in the introduction converge at a geometric rate: their short memory facilitates a very quick self-correction mechanism. However, the self-correction is blunt: asymptotic behavior meanders around target, typically spreading most allocations over $2-4$ levels.\footnote{In spite of the blunt allocation distribution, up-and-down {\it estimates} do become sharper with time, since they rely on all the gathered information.} Long-memory designs converge (if they converge) much more slowly, at a root-$n$ rate; this means that their self-correction, even at small samples, is also slow. The promised compensation is a perfectly sharp asymptotic allocation distribution, zooming in on the MTD itself. Regardless of design and proof details, this outcome hinges upon the precision of point estimates, whose convergence was demonstrated in Lemma \ref{lem:SLLN}. Unfortunately, simple arithmetic on Binomial probabilities suggests that for point estimates to be precise with high reliability requires many more trials than the typical dose-finding sample size of $10-40$ subjects, who are inevitably spread over several dose levels.

In fact, the main difference between asymptotic and small-sample behavior, is that the latter is dominated by very imprecise Binomial point estimates. Thus, during an initial stage of the experiment, a long-memory design might point towards a level quite far from the MTD, if a large enough proportion of the individual-subject toxicity trials yielded ``atypical'' outcomes. In sampling terms, if the initial group of sampled toxicity thresholds can be seen collectively as an outlier, then long-term designs (both parametric and nonparametric) are led astray. When this happens, the long memory and its associated slow self-correction become liabilities rather than assets: point estimates are off, and they will now improve only gradually because the initial, ``outlier'' group of outcomes is still included in any subsequent estimate. Meanwhile, the design will insist upon collecting information at the wrong place, further slowing the self-correction mechanism. In practice, this positive-feedback reaction makes long-memory designs less robust to the experiment's first few observations, compared with short-memory designs. This phenomenon is unrelated to design details (parametric or nonparametric), and has been observed numerically for both CCD and CRM \citep{Oron07jsm,Oron09dae}. It underscores the two messages conveyed here: 1. The study of convergence properties can help explain small-sample behavior, and 2. Convergence is a necessary requirement for a sound dose-finding design, but convergence alone is not sufficient to guarantee desirable small-sample behavior.

\bibliographystyle{elsart-harv}
\bibliography{phdplus}

\end{document}